\tikzstyle{user}=[circle,draw=gray!60,fill=gray!30,thick,scale=1.5, inner sep=0pt, minimum size=5mm]
\tikzstyle{userselected}=[circle,draw=red!130,fill=gray!30,thick,scale=1.5, inner sep=0pt, minimum size=5mm]
\tikzstyle{view}=[circle,draw=orange!100,fill=yellow!75,thick,scale=1.5, inner sep=0pt, minimum size=5mm]
\tikzstyle{view2}=[circle,draw=black!100,fill=red!50!,thick,scale=1.5, inner sep=0pt, minimum size=5mm]
\tikzstyle{arco}=[<->,shorten >=1pt,>=stealth',semithick]
\tikzstyle{semiarco}=[->,shorten >=1pt,>=stealth',semithick]
\DeclareMathOperator{\GL}{GL}
\DeclareMathOperator{\F}{\mathbb{F}}
\DeclareMathOperator{\ANT}{ANT}
\DeclareMathOperator{\CA}{CA}
\DeclareMathOperator{\LCD}{LCD}
\DeclareMathOperator{\Pp}{P}
\DeclareMathOperator{\Ss}{S}
\DeclareMathOperator{\ET}{ECTAK}
\DeclareMathOperator{\ETS}{ECTAKS}
\DeclareMathOperator{\EC}{\mathcal C}
\DeclareMathOperator{\Hh}{H}
\DeclareMathOperator{\KDF}{KDF}
\DeclareMathOperator{\Enc}{Enc}
\DeclareMathOperator{\Dec}{Dec}
\DeclareMathOperator{\Aa}{Adv}
\DeclareMathOperator{\rk}{rk}
\newcommand{\ANTi}[1]{\ANT_{#1}}
\newcommand\deq{\mathrel{\stackrel{\makebox[0pt]{\mbox{\normalfont\tiny def}}}{=}}}
\newcommand\ra{\mathrel{\stackrel{\makebox[0pt]{\mbox{\normalfont\tiny R}}}{\in}}}
\newcommand{\bk}{{\bf k}}
\newcommand{\bt}{{\bf t}}
\newcommand{\bmm}{{\bf m}}
\let\phi\varphi
\theoremstyle{plain}
\newtheorem{theorem}{Theorem}[section]
\newtheorem{lemma}[theorem]{Lemma}
\theoremstyle{remark}
\newtheorem{remark}[theorem]{Remark}
\theoremstyle{problem}
\newtheorem{problem}[]{Problem}
\theoremstyle{definition}
\newtheorem{example}[theorem]{Example}
\newtheorem*{notation*}{Notation}
\begin{document}
\title[An Authenticated Key Scheme over EC]{An Authenticated Key Scheme over Elliptic Curves for Topological Networks}

\author{R. Aragona} 
\author{R. Civino} 
\author{N. Gavioli} 
\author{M. Pugliese} 
\address{DISIM - University of L'Aquila - Italy}
\email{riccardo.aragona@univaq.it} 
\email{roberto.civino@univaq.it} 
\email{norberto.gavioli@univaq.it} 
\email{marpug@univaq.it} 
\thanks{R. Aragona, R. Civino, and N. Gavioli are members of INdAM-GNSAGA (Italy). N. Gavioli is member of the Centre of excellence ExEMERGE at University of L'Aquila which also partially funds R. Civino.}

\subjclass[2010]{94A60,
94A62, 94C15}
\keywords{Key Establishment Protocols, Elliptic Curve Cryptography, Discrete Logarithm Problem, Authenticated Networks, Directed Graphs.}
\date{}

\begin{abstract}
Nodes of sensor networks may be resource-constrained devices, often having a limited lifetime, making sensor networks remarkably dynamic environments. Managing a cryptographic protocol on such setups may require a disproportionate effort when it comes to update the secret parameters of new nodes that enter the network in place of dismantled sensors. For this reason, the designers of schemes for sensor network are always concerned with the need of scalable and adaptable solutions. In this work, we present a novel elliptic-curve based solution, derived from the previously released cryptographic protocol TAKS, which addresses this issue. We give a formal description of the scheme, built on a two-dimensional vector space over a prime field and over elliptic curves, where node topology is more relevant  than node identity, allowing a dynamic handling of the network and reducing the cost of network updates. We compare our solution with classical Diffie-Hellman-like protocols and we also study some security concerns and their relation to the related discrete logarithm problem over elliptic curves.

\end{abstract}

\maketitle


\normalsize
\section{Introduction}
The \emph{Topology Authenticated Key Scheme} (TAKS) is a cryptographic protocol, proposed in~\cite{phdpugliese,pugliese08} for the first time and successively generalised~\cite{pugl13,pugliese13}, providing security over a resource-constrained network (typically ad-hoc networks, e.g.~sensor networks for monitoring services).
Its authentication mechanism is based on \emph{node topology} rather than on \emph{node identity}, due to the limited lifetime of nodes in a resource-constrained network. Indeed, while nodes in infra-structured networks can rely on a external power supply and on a stable planned maintenance service with human intervention, the nodes in ad-hoc networks can rely only on their own battery or some other energy harvesting mechanism, and maintenance services are usually remotely performed without human intervention. When an off-duty node is replaced with a new node, the new node identity enters in the network, but node topology remains unchanged and the authentication mechanism does not need any updating. Other examples of network-based key pre-distribution schemes may be found in~\cite{alsubaie,arjmandi,haowen,Kuchipudi,liuning}. \\


 The scheme we propose in this paper, called \emph{Elliptic Curve based Topological Authenticated Key Scheme} ($\ETS$), is derived from TAKS and is defined as a hybrid deterministic \emph{Key Establishment Protocol} (KEP) over elliptic curves, and is designed to establish both point-to-point and point-to-multipoint secure links among nodes. Security features of $\ETS$ may include confidentiality (data encryption), data integrity and sender authentication (signature). Other examples of hybrid KEPs may be found in \cite{kavitha,kimyun,manjunath}. 
 In $\ETS$ the shared secret is a symmetric key generated by each party involved in the communication session upon a successful authentication process, where each party verifies if the other party belongs to its authenticated network. 
 Such a network is represented as a graph, where parties (network nodes) are the vertices and the communication links are the edges. The assignment parameter to each node is carried out by an external \emph{Certification Authority}  ($\CA$); the scheme parameter are successively preloaded into the nodes. 
 
\noindent While TAKS~\cite{pugliese08} provides only key establishment facilities for point-to-point communications by means of a Diffie-Hellman-like scheme and its generalisation~\cite{pugl13} extends to point-to-multipoint sessions, the improvements  implemented in this paper directly provide $\ETS$ with key establishment capabilities for both point-to-point and point-to-multipoint communications. More importantly, elliptic curve cryptography allows achieving comparable security levels using reduced key lenghts~\cite{gupta2002performance}.\\

 In this paper we provide a rigorous description of $\ETS$ and address a security analysis of the scheme. In this regard, we will show that $\ETS$ can be broken if an adversary can solve the intractable Discrete Logarithm problem over elliptic curves, provided that it also manage to solve a linear system of equations.\\
 The paper is organised as follows: in Sec.~\ref{sec:nota} we introduce the notation and some auxiliary results. The scheme is defined in Sec.~\ref{sec:scheme}, together with the authenticated-encryption methods which is derived from it. An early security analysis of the scheme is carried out in Sec.~\ref{sec:security}. Sec.~\ref{sec:concl} concludes the paper with some considerations on open problems.

\section{Notation}\label{sec:nota}
$\ETS$ is a scheme based on elliptic curves as well as vector spaces over finite fields. The network of nodes where $\ETS$ is built is modeled by a graph. In order to provide a rigorous description of the scheme and of its model, let us define the following elements.
\subsubsection*{Spaces}
Let $p$ be a prime number, and let $\F_p$ be the finite field with $p$ elements. We denote by $\alpha \ra \F_p$ an uniformly random generated  element in $\F_p$. The scheme presented is this paper is mainly based on the 2-dimensional vector space $(\F_p)^2$. Scalar elements in $\F_p$ are usually denoted by lower-case greek letters, whereas vectors in $(\F_p)^2$ are denoted by bold latin letters. Given two vectors $\bk = (\alpha_1, \alpha_2), \bt = (\beta_1, \beta_2) \in (\F_p)^2$ we define the scalar product over $\F_p $ of $\bk$ and $\bt$ as
\[
\bk \cdot \bt \deq \alpha_1\beta_1+\alpha_2\beta_2 \in \F_p.
\]
\subsubsection*{Elliptic curves}
Let $q > 3$ be the power of a prime number. An elliptic curve over $\F_q$ is the abelian group of $\F_q$-rational points satisfying a Weierstrass equation~\cite{Sil09}, i.e. 
\[
\EC \,\deq \{(x,y) \in (\F_q)^2 \mid y^2 = x^3+ax+b\}, 
\]
where $a, b \in \F_q$. 
Throughout this paper we will denote by $G\in \EC$ the generator of a subgroup of $\EC$ of order $p$, called the \emph{base point}. Given a vector $\bk = (\alpha_1, \alpha_2) \in (\F_p)^2$ and $\beta_1, \beta_2 \in \F_p$ we define 
\[
\bk G \,\deq (\underbrace{G+G+\ldots +G}_{\alpha_1 \text{ times }},\underbrace{G+G+\ldots +G}_{\alpha_2 \text{ times }}) = (\alpha_1 G , \alpha_2 G) \in \EC^2, 
\]
and 
\[
\bk \cdot (\beta_1G, \beta_2G) \deq \alpha_1\beta_1G+\alpha_2\beta_2G \in \EC. 
\]
The following result is easily checked.
\begin{lemma}
Let $\bk,\bt \in (\F_p)^2$. Then we have 
\[
(\bk \cdot \bt ) G = \bk \cdot  (\bt  G).
\]
\end{lemma}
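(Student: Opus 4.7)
The plan is to unwind both sides of the claimed equality using the definitions introduced just before the lemma, and then invoke the fact that integer multiplication on an elliptic-curve point is a group homomorphism $\mathbb{Z}\to\EC$, which factors through $\F_p$ because $G$ has order $p$.

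First I would write $\bk=(\alpha_1,\alpha_2)$ and $\bt=(\beta_1,\beta_2)$ with $\alpha_i,\beta_i\in\F_p$. Expanding the left-hand side using the definition of the scalar product and then the definition of $\bt G$ (taken here in the one-dimensional sense, i.e.\ $\gamma G=G+\cdots+G$ repeated $\gamma$ times), I get
\[
(\bk\cdot\bt)\,G \;=\; (\alpha_1\beta_1+\alpha_2\beta_2)\,G.
\]
For the right-hand side I first apply the definition of $\bt G$ to obtain the pair $(\beta_1 G,\beta_2 G)\in\EC^2$, and then apply the definition of $\bk\cdot(\beta_1G,\beta_2G)$ to get
\[
\bk\cdot(\bt G) \;=\; \alpha_1\beta_1 G + \alpha_2\beta_2 G.
\]

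The remaining step is to identify the two expressions. This reduces to the identity $(m+n)P=mP+nP$ for $P\in\EC$ and $m,n\in\mathbb{Z}$, applied with $m=\alpha_1\beta_1$ and $n=\alpha_2\beta_2$ (lifted to integers). Strictly speaking $\alpha_i\beta_i$ lives in $\F_p$, but because $G$ generates a subgroup of order $p$ the assignment $\gamma\mapsto \gamma G$ is a well-defined group homomorphism $\F_p\to\EC$, so the identity passes to $\F_p$ without ambiguity. There is no real obstacle here: the lemma is essentially a bookkeeping statement whose only content is to certify that the notation $\bk G$ and $\bk\cdot(\bt G)$ introduced above interact with the scalar product $\bk\cdot\bt$ exactly as one would hope. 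The only thing worth being careful about is the consistency of reducing integer scalars modulo $p$, which is legitimate precisely because the base point $G$ has order $p$.
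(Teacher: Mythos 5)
Your proposal is correct and follows essentially the same route as the paper: expand both sides via the definitions and identify them using the distributivity $(m+n)G = mG + nG$. The extra remark about the map $\gamma \mapsto \gamma G$ being well defined on $\F_p$ because $G$ has order $p$ is a sensible point of care that the paper leaves implicit.
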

\begin{proof}
Let $\bk = (\alpha_1, \alpha_2)$  and $\bt = (\beta_1, \beta_2)$. Then 
\begin{align*}
(\bk \cdot \bt ) G &= (\alpha_1\beta_1+\alpha_2\beta_2)G  \\
&= \alpha_1\beta_1G+\alpha_2\beta_2G  \\
&=  (\alpha_1,\alpha_2) \cdot (\beta_1 G, \beta_2G)  \\
&= \bk\cdot (\bt G).  \qedhere
\end{align*}
\end{proof}
We assume the security of the scheme we propose in this paper to be relying on the following security assumption~\cite{KoblitzECC,MillerECC}.
\begin{problem}[ECDL problem]
Let $\EC$ be an elliptic curve over the finite field $\F_q$, let $P \in \EC$ and $m > 1$. The \emph{(computational) Elliptic Curve Discrete Logarithm problem} (ECDL problem)
is the problem of finding the integer $m$ when the points $P$ and $mP$ are given.
\end{problem} 
Provided that the curve meets some well-established requirements (see e.g~\, \cite{Bernsteincurve,FIPS186,Koblitzcurve,weakfields}), the ECDL problem is assumed to be computationally intractable~\cite{Bach,koblitz2010intractable}. For some overviews on algorithms solving the ECDL problem and related problems, see e.g.~\cite{Galbrath,menstate}.
\subsubsection*{Graphs}
A \emph{directed graph} is a pair of sets $(V,E)$, where $V \neq \emptyset$ and $E \subseteq V\times V$. We call \emph{arrows} the elements of $E$, and for each arrow $e = (i,j) \in E$ we denote the \emph{tail} of the arrow as $t(e)\deq i$ and the \emph{head} of the arrow as $h(e)\deq j$. In order to keep the notation cleaner, we will sometimes denote the arrow $(i,j)$ by writing ``$i-j$". If $i,j \in V$, we say that $i$ and $j$ are \emph{connected} in $(V,E)$ if $(i,j) \in E$ and $(j,i) \in E$. 
\section{The scheme}\label{sec:scheme}
The scheme $\ETS$ is a cryptographic protocol based on a network of users who want to communicate with each others.  The network is modeled by means of a graph where users are represented as nodes. A node models whatever physical device equipped with a processing unit together to sensors/detectors and a radio chip for {TX/RX operations}. Two users can communicate if and only if they are connected in the corresponding graph. In this setting, if the arrow from node $i$ to node $j$ exists, then user $i$ is allowed to start a communication session  with user $j$.
The communication between two users can start if they manage to exchange a shared secret which they will use to create an authenticated-encryption session (see Sec.~\ref{sec:auth}).  The external $\CA$ assigns a set of parameters, called \emph{Local Configuration Data} ($\LCD$), to each node of the network. For each node, $\LCD$ is composed by two secret components, that remain unchanged once generated, and a public component, which is updated every time a new node joins the dynamic network.
Moreover,  the $\CA$ generates $\LCD$ in such a way communications among nodes are allowed only if their topology is compliant to the planned network topology.\\

The scheme $\ETS$ presents some relevant features which makes it quite different from classical Diffie-Hellman-like schemes based on elliptic curves (ECDH schemes).
\begin{enumerate}
\item With respect to classic ECDH-like solutions, the scheme proposed in this paper does not only rely on pre-distribution of keys in nodes. As a matter of fact, the proposed protocol is rather based on a dynamic assignment of the public components to each node and on a static assignment of secret components. The shared secret in $\ETS$ is in fact a function of both sender and receiver private key components, while in ephemeral ECDH-like schemes the shared secret is usually a function of the complete sender private key and receiver public key and this is a nice property from a robustness point of view.
\item Vector spaces, rather than scalars, are introduced to allow the setup of authenticated truly point-to-multipoint communication sessions, i.e.~sessions from a master root node to leaf nodes,  without setting up multiple point-to-point sessions. The added dimensionality introduces a further degree of freedom for the CA in the procedure of parameters computation and assignment. From an engineering point of view, point-to-multipoint sessions provide a relevant feature especially for clustered networks, when management services, such as the updating of some configuration parameters in a specific cluster of the network, are activated.
\item The activation of an authenticated point-to-multipoint communication session is truly scalable in $\ETS$, since any new communication link from the master node to a new member  added to an already established point-to-multipoint session will not imply any parameter update. This relevant feature is due to the fact that the generated shared secret for any new communication link results the same shared secret of the existing point-to-multipoint session, differently from classical ECDH-like schemes, where the addition of a communication link implies a newly generated and different shared secret, which depends on the public key of the new member node. Therefore, a point-to-multipoint communication session results in the aggregation of multiple point-to-point sessions, each one from the master to a member. From an engineering point of view, scalable point-to-multipoint sessions not only mean lighter requirements for memory storage and protocol complexity, but also the availability of syncronised transmissions within the nodes in the cluster as encryption / authentication operations on transmitted data can be performed by the master at the same time using a unique key. \\
\end{enumerate}

Let us know give a formal description of the scheme. At the end of the following session, an example over a simple network is shown.
\subsection{Parameter definition}
Let $N$ be a positive integer and let $V \deq \{1,2, \ldots, N\}$. 
The \emph{authenticated network topology}  is a symmetric directed graph $\ANT = (V,E)$, i.e.~a graph where $E \subset V\times V$ and if $(i,j) \in E$, then $(j,i) \in E$. We furthermore assume that $\ANT$ is loop-free, i.e. without cycles of length 1. For each $1 \leq i \leq N$, $\ANTi{i} = (V_i, E_i)$ is the (non-symmetric and cycle-free) directed 
 subgraph of $\ANT$ such that 
\[E_i \deq \left\{ e\in E  \mid t(e) = i\right\} \,\,\text{ and } \,\,V_i \deq \{i\} \cup \{h(e)\mid e \in E_i\}.\] 
In the point of view of our application, $\ANTi{i}$ is the subgraph of the users which user $i$ is entitled to communicate to. An example of network topology network is depicted in Fig.~\ref{graph}.\\

\begin{figure}[h]
  \centering
  \begin{tikzpicture}[scale=0.8]
			\begin{pgfonlayer}{nodelayer}
				\node [style=view2] (0) at (-2.25, 3) {};
				\node [style=view2 ] (1) at (1.5, 3) {};
				\node [style=view2] (2) at (-0.75, 1.5) {\footnotesize $i$};
				\node [style=view] (3) at (-3.75, 1.5) {};
				\node [style=view] (4) at (3, 1.5) {};
				\node [style=view2] (5) at (0.5, -0.5) {};
				\node [style=view2] (6) at (-2, -0.5) {};
				\node [style=view2] (7) at (0, 4.25) {};
				\node [style=view] (8) at (-4.25, 4.25) {};
			\end{pgfonlayer}
			\begin{pgfonlayer}{edgelayer}
				\draw [style=arco] (0) to (3);
				\draw [style=arco] (0) to (2);
				\draw [style=arco] (1) to (2);
				\draw [style=arco] (1) to (2);
				\draw [style=arco] (1) to (4);
				\draw [style=arco] (2) to (5);
				\draw [style=arco] (2) to (6);
				\draw [style=arco] (3) to (6);
				\draw [style=arco] (7) to (2);
				\draw [style=arco] (8) to (0);
		\end{pgfonlayer}
	\end{tikzpicture}  
	\caption{An example of $\ANT$, where red nodes represent $\ANTi{i}$.}
  \label{graph}
\end{figure}
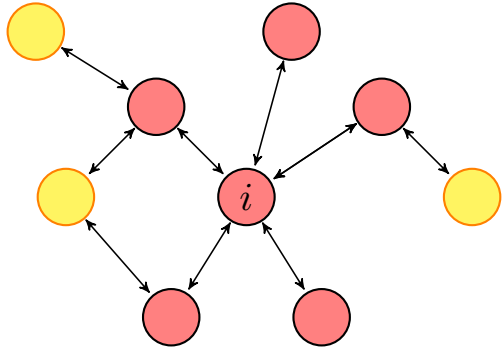
Let us now denote by $\EC$ an elliptic curve over $\F_q$, where $q > 3$, and $G \in \EC$ is the base point, whose order is prime number $p$. From now on we will assume $p >> N$.
Once the $\ANT$ has been established, the $\CA$ is in charge of the assignment of the scheme parameters to each node. For each node $i \in V$, its assigned \emph{local configuration data} $\LCD_{i} = (\Ss_i, \Pp_i)$ is such that
\[
\Ss_i \deq \{\bk_i, \bt_i\} \,\,\text{ and }\,\,\Pp_i \deq \bigl\{\bmm_{i-j}G\bigr\}_{j \in V_i \setminus \{i\}},
\]
where $\bk_i \in (\F_p)^2\setminus \{\bf 0\}$ is called the \emph{local key component} corresponding to the node $i$, $\bt_i \in (\F_p)^2\setminus \{\bf 0\}$ is called the \emph{transmitted key component} corresponding to the node $i$, and $\bmm_{i-j}G \in \EC^2\setminus \{0\}$ is called the \emph{topology vector} corresponding to the arrow $(i,j)$. The component $\Ss_i$ represents the private information assigned to node $i$, whereas $\Pp_i$ represents its public information.

\noindent The $\CA$ assigns the parameters to each node in a sequential way, once it has chosen an arbitrary \emph{root} node for each connected component of the graph. Starting from the parameters assigned to the root node, the $\CA$ computes the parameters for the other nodes of the graph according to some constraints which allow each pair of topologically admissible nodes to compute a shared secret, that we called \emph{Elliptic Curve Topology Authenticated Key} ($\ET$).  
\subsection{Parameter assignment and shared secrets}
Let $\ANT=(V,E)$ be the planned topology. As already mentioned, the parameter assignment is carried out in a sequential way by the $\CA$, starting from a root node in each connected component of the graph. 
\subsubsection{Parameters from a root node}\label{sec:ass}
Let node $i$ be the first root node chosen by the $\CA$.  Then the parameters $\bk_i$ and $\bt_i$ are generated randomly from $(\F_p)^2\setminus\{{\bf 0 }\}$ and assigned to the secret component $S_i = \{\bk_i,\bt_i\}$ of node $i$. Now, for each node $j$ to be connected to node $i$, two cases need to be distinguished:\\

\paragraph{\underline{$S_j$ \textbf{is not defined}}:}  the parameter $\bmm_{i-j}$ is generated randomly by the $\CA$, provided that $\bk_i \cdot \bmm_{i-j} \neq 0$, and the corresponding topology vector $\bmm_{i-j}G$ is appended to the public component $P_i$ of node $i$. Once the topology vector related to the arrow $(i,j)$ has been defined, the  parameters for node $j$ can be defined running the following steps:
\begin{itemize}
\item the parameter $\bk_j$ is randomly chosen by the $\CA$ from the solutions of the linear equation
\begin{equation}\label{eq:eqi-j}
\bk_i \cdot \bmm_{i-j} = \bk_j \cdot \bt_i
\end{equation}
and it is assigned to node $j$;
\item the parameter $\bmm_{j-i}$ related to the arrow $(j,i)$ is generated randomly, provided that \[\bk_j \cdot \bmm_{j-i} \neq 0;\]
the corresponding topology vector $\bmm_{j-i}G$ is assigned to node $j$;
\item  $\bt_j$ is randomly chosen by the $\CA$ from the solutions of the linear equation
\begin{equation}\label{eq:eqj-i}
\bk_j \cdot \bmm_{j-i} = \bk_i \cdot \bt_j
\end{equation}
and it is assigned to node $j$.
\end{itemize}

\noindent At the end of this process we have that
\begin{itemize}
\item  $\bmm_{i-j}G$ is appended to  $P_i$ for node $i$, 
\item $S_j = \{\bk_j, \bt_j\}$ and $\bmm_{j-i}G$ is appended to  $P_j$ for  node $j$.\\
\end{itemize}

\paragraph{\underline{$S_j$ \textbf{is already defined}}:} the topology vector related to the arrows $(i,j)$ and $(j,i)$ are defined as follows: 
\begin{itemize}
\item the parameter $\bmm_{i-j}$ is randomly chosen by the $\CA$ from the solutions of the linear equation
\begin{equation}\label{eq:eqi-j_arrow}
\bk_i \cdot \bmm_{i-j} = \bk_j \cdot \bt_i;
\end{equation}
\item the parameter $\bmm_{j-i}$ is randomly chosen by the $\CA$ from the solutions of the linear equation
\begin{equation}\label{eq:eqj-i_arrow}
\bk_j \cdot \bmm_{j-i} = \bk_i \cdot \bt_j.
\end{equation}
\end{itemize}
\noindent At the end of this process we have that
\begin{itemize}
\item $\bmm_{i-j}G$ is appended to  $P_i$ for node $i$,
\item $\bmm_{j-i}G$ is appended to  $P_j$ for  node $j$.
\end{itemize}
This process is completed when the $\CA$ has assigned secret and public components to each node of the graph.
\subsubsection{The shared secret}
\noindent Assume now that node $i$ wants to start a session with node $j$. Then node $i$ and node $j$ can agree on an ephemeral shared secret, performing the following operations:
\begin{itemize}
\item node $i$ generates a random non-zero element $\alpha \ra \F_p$;
\item node $i$ sends $\alpha \bt_i  G$ to node $j$;
\item node $i$ defines $\ET_{i-j} \deq \alpha \bk_i\cdot (\bmm_{i-j} G)$.
\end{itemize}
Now node $j$ can compute 
\[
\bk_j \cdot (\alpha \bt_i G) = (\bk_j \cdot \alpha \bt_i) G = (\alpha \bk_i\cdot \bmm_{i-j}) G = \alpha \bk_i\cdot (\bmm_{i-j} G)= \ET_{i-j},
\]
where the second equality is obtained from Eq.~\eqref{eq:eqi-j}. 
Consequently node $i$ and node $j$ have shared the non-zero secret $\ET_{i-j} \in \EC$.
Similarly, node $j$ can agree with node $i$ on the shared secret 
\[
\ET_{j-i} \deq \alpha \bk_j\cdot (\bmm_{j-i} G) = \bk_i \cdot (\alpha \bt_j G),
\] where $\alpha$ is again an ephemeral random chosen non-zero element in $\F_p$ generated by node $j$, and the second equality is derived from Eq.~\eqref{eq:eqj-i}.
\begin{remark}
For each $(i,j) \in E$, the component $\bmm_{i-j}$, which is generated by the $\CA$ in order to define the public component $\bmm_{i-j}G$, is not accessible by any user (belonging or not to  the network), unless they can solve the ECDL problem, as better explained in Section~\ref{sec:security}.
\end{remark}
\begin{remark}
When the session between node $i$ and node $j$ has timed out or is not anymore valid, node $i$ and node $j$ can again agree on a disposable shared secret by selecting a new random parameter $\alpha$. The same happens if node $i$ is damaged and needs to be replaced by another sensor. The $\CA$ assigns to the new node $i$ the same secret parameter of the former node $i$, and the communication with node $j$ is again established by selecting a new random parameter $\alpha$.
\end{remark}

\begin{remark}
The parameter assignment is highly scalar-product based. For this reason, it is important to point out that for each $i,j\in V$, the products of Eq.~\eqref{eq:eqi-j} defining $\ET_{i-j}$ are uniformly distributed over $\F_p\setminus\{0\}$, since, by definition, its inputs are non-zero elements of $(\F_p)^2$. Moreover, the secret components $\bk_i$ and $\bt_i$ can be chosen  from the $p$ solutions of Eq.~\eqref{eq:eqi-j} and Eq.~\eqref{eq:eqj-i} respectively.  In other words, the constraints defined in~Eq.~\eqref{eq:eqi-j} and Eq.~\eqref{eq:eqj-i} reduce the complexity of guessing the secret component $\bk_i$ (respectively $\bt_i$) from $p^2$ to $p$, since the value to be guessed needs to satisfy a linear equation. However, this does not represent a security issue since the parameter $p$ is chosen to be a secure parameters, therefore the security of the scheme should rely on the size of $p$ and not on the size of $p^2$.
\end{remark}
\begin{example}
As an example, let us describe the parameter assignment on the simple $\ANT$ of Fig.~\ref{graph3}, where 
\[V = \{1,2,3\} \text{ and } E = \{(1,2),(2,1),(1,3),(3,1),(2,3),(3,2)\},\] and the root node where parameter assignment starts is assumed to be node $1$. 

\begin{figure}[h]
  \centering
  \begin{tikzpicture}[scale=0.8]
			\begin{pgfonlayer}{nodelayer}
				\node [style=view2, dashed] (2) at (-0.75, 1.5) {\footnotesize $1$};
				\node [style=view2] (5) at (0.5, -0.5) {\footnotesize $3$};
				\node [style=view2] (6) at (-2, -0.5) {\footnotesize $2$};
			\end{pgfonlayer}
			\begin{pgfonlayer}{edgelayer}
				\draw [style=arco] (2) to (5);
				\draw [style=arco] (2) to (6);
				\draw [style=arco] (5) to (6);
		\end{pgfonlayer}
	\end{tikzpicture}  
	\caption{Key assignment on a simple ANT, the root node is highlighted.}
  \label{graph3}
\end{figure}
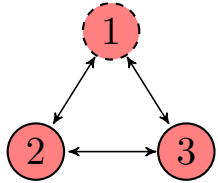
\subsubsection*{Node 1}
The parameters $\bk_1$ and $\bt_1$ are generated randomly from $(\F_p)^2\setminus\{{\bf 0 }\}$ and assigned to the secret component $S_1 = \{\bk_1,\bt_1\}$ of node $1$. 
\subsubsection*{Node 2}
Let us assume that the second node chosen by the $\CA$ is node $2$ in $V_1$. Then, the parameter $\bmm_{1-2}$ is generated randomly by the $\CA$, provided that $\bk_1 \cdot \bmm_{1-2} \neq 0$, and the corresponding topology vector $\bmm_{1-2}G$ is appended to the public component $P_1$ of node $1$. According to the description of Sec.~\ref{sec:ass}:
\begin{itemize}
\item the parameter $\bk_2$ is chosen by the $\CA$ from the solutions of 
\begin{equation*}\label{eq:eq1-2}
\bk_1 \cdot \bmm_{1-2} = \bk_2 \cdot \bt_1;
\end{equation*}
\item the parameter $\bmm_{2-1}$ related to the arrow $(2,1)$ is generated randomly, provided that \[\bk_2 \cdot \bmm_{2-1} \neq 0;\]
\item  $\bt_2$ is chosen by the $\CA$ from the solutions of \begin{equation*}\label{eq:eq2-1}
\bk_2 \cdot \bmm_{2-1} = \bk_1 \cdot \bt_2.
\end{equation*}
\end{itemize}
At the end of this process the $\CA$ has assigned:
\begin{itemize}
\item  $P_1 = \{\bmm_{1-2}G\}$ to node $1$,
\item $S_2 = \{\bk_2, \bt_2\}$ and $P_2 = \{\bmm_{2-1}G\}$ to node $2$.
\end{itemize}
\subsubsection*{Node 3}
Let us assume that the third node chosen by the $\CA$ is node $3$ in $V_1$. Then the secret component $S_1$ of  node $1$ remains unchanged and, proceeding as for node 2, the parameter $\bmm_{1-3}$ is generated randomly by the $\CA$, provided that $\bk_1 \cdot \bmm_{1-3} \neq 0$, and the corresponding topology vector $\bmm_{1-3}G$ is appended to the public component $P_1$ of node $1$. 
Once the topology vector related to the arrow $(1,3)$ has been defined, the parameters $\bk_3$, $\bmm_{3-1}$ and $\bt_3$  for node $3$ can be defined proceeding as for node $2$. At the end of this process the secret and public components are respectively
\begin{itemize}
\item  $S_1=\{\bk_1, \bt_1\}$ and  $P_1=\{\bmm_{1-2}G,\bmm_{1-3}G\}$ for  node $1$,
\item $S_2 = \{\bk_2, \bt_2\}$ and $P_2 = \{\bmm_{2-1}G\}$ for node $2$,
\item $S_3 = \{\bk_3, \bt_3\}$ and $P_3 = \{\bmm_{3-1}G\}$ for  node $3$.
\end{itemize}
\subsubsection*{Arrows $(2,3)$ and $(3,2)$}
In the topology under consideration, the communication between node $2$ and node $3$ is allowed. Since the secret components $S_2$ and $S_3$ have already been defined to link node 1 to node 2 and node 3, the parameter generation for $\bmm_{2-3}$ and $\bmm_{3-2}$ needs to be carried out as described in Sec.~\ref{sec:ass} in the second case. For this reason
\begin{itemize}
\item the parameter $\bmm_{2-3}$ is chosen by the $\CA$ from the solutions of 
\begin{equation*}
\bk_2 \cdot \bmm_{2-3} = \bk_3 \cdot \bt_2;
\end{equation*}
\item the parameter $\bmm_{3-2}$ is chosen by the $\CA$ from the solutions of 
\begin{equation*}
\bk_3 \cdot \bmm_{3-2} = \bk_2 \cdot \bt_3.
\end{equation*}
\end{itemize}
\noindent At the end of this process we have that
\begin{itemize}
\item $\bmm_{2-3}G$ is appended to  $P_2$ for node $2$,
\item $\bmm_{3-2}G$ is appended to  $P_3$ for  node $3$.
\end{itemize}
In conclusion, we have
\begin{itemize}
\item  $S_1=\{\bk_1, \bt_1\}$ and  $P_1=\{\bmm_{1-2}G,\bmm_{1-3}G\}$ for  node $1$,
\item $S_2 = \{\bk_2, \bt_2\}$ and $P_2 = \{\bmm_{2-1}G,\bmm_{2-3}G\}$ for node $2$,
\item $S_3 = \{\bk_3, \bt_3\}$ and $P_3 = \{\bmm_{3-1}G,\bmm_{3-2}G\}$ for  node $3$,
\end{itemize}
and the parameter generation is complete.
\end{example}
\subsection{ECTAK-based authenticated encryption}\label{sec:auth}
We show here a classical way to provide authenticated encryption, using the $\ET$ as shared secret. In the following we denote by $\Hh$ and $\KDF$ a keyed hash function and a key-derivation function respectively, whereas $\Enc_k$ and $\Dec_k$ respectively denotes the encryption and the decryption procedures using the key $k$ of a symmetric encryption method, where $\Dec_k = \Enc_k^{-1}$. \\
Assume now that node 1 wants to send a signed encrypted message $m$ to node 2. Then node 1 performs the following operations: 
\begin{itemize}
\item generates $\alpha \ra \F_p \setminus \{0\}$;
\item computes $\ET_{1-2} \deq \alpha \bk_1\cdot (\bmm_{1-2} G)$;
\item computes $(k_1,k_2) \deq \KDF(\ET_{1-2}, \alpha \bt_1 G)$;
\item computes $c \deq \Enc_{k_1}(m)$;
\item computes $s \deq \Hh(k_2,c)$;
\item sends $(\alpha \bt_1 G, c, s)$ to node 2,
\end{itemize}
where the size of $m,c$ and $s$ suits respectively the domain of encryption and hash functions.\\
\noindent Node 2 can perform the following steps:
\begin{itemize}
\item computes $ \bk_2 \cdot (\alpha \bt_1 G) = \ET_{1-2}$;
\item recovers $(k_1,k_2)$ by computing $\KDF(\ET_{1-2}, \alpha \bt_1 G)$;
\item recovers $m$ computing $\Dec_{k_1}(c)$;
\item checks that $s = \Hh(k_2,c)$.
\end{itemize}

\section{Considerations on security}\label{sec:security}
In this section we present some security properties of the scheme. We will show, in Theorem~\ref{th:main}, in which way an attacker can successfully determine the secret parameters of a target node $i$. Our interest in this case scenario is due to the occurrence that typical deployment of sensor nodes in the operation environment is generally unattended and therefore the risk of physical attacks such as the brute capture of a node should be seriously taken into consideration. We will prove that, in order the attack to be successful, the attacker needs to recover the secret information of at least two nodes connected to node $i$. The success probability of the attack is calculated in Theorem~\ref{th:lim}. The attack relies on the ability of the attacker to solve an instance of the intractable ECDL problem. To the best of our current state of knowledge, it is not possible to provide a formal reduction  from one problem to the other.\\

Let us now prove our result  showing that if an attacker can solve the ECDL problem  and can  recover the secret components $S_j$ and $S_k$, for some nodes $j,k \in V_i$ connected to $i$, then it can recover $S_i$ only if an algebraic condition on the coordinates is satisfied. In particular, from this follows that compromising one and only node is not enough to recover $S_i$.
Without any loss of generality, let us denote by node 1, node 2 and node 3 the three nodes previously mentioned. Let us assume that node 1 is targeted by the attacker, which has successfully recovered data from node 2 and node 3.
Moreover, to further simplify, let us assume  $\ANTi{1} = (V_1, E_1)$, where $V_1 = \{1,2,3\}$ and $E_1 = \{(1,2),(1,3)\}$, as depicted in Fig.~\ref{graph2}.
\begin{figure}[h]
  \centering
  \begin{tikzpicture}[scale=0.8]
			\begin{pgfonlayer}{nodelayer}
				\node [style=view2, dashed] (2) at (-0.75, 1.5) {\footnotesize $1$};
				\node [style=view2] (5) at (0.5, -0.5) {\footnotesize $3$};
				\node [style=view2] (6) at (-2, -0.5) {\footnotesize $2$};
			\end{pgfonlayer}
			\begin{pgfonlayer}{edgelayer}
				\draw [style=semiarco] (2) to (5);
				\draw [style=semiarco] (2) to (6);
		\end{pgfonlayer}
	\end{tikzpicture}  
	\caption{The network targeted by the attacker, where the target node 1 is highlighted.}
  \label{graph2}
\end{figure}
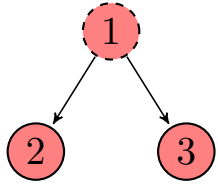

\noindent We assume that $\LCD_2 = \{(\bk_2,\bt_2),\bmm_{2-1}G\}$ and $\LCD_3 = \{(\bk_3,\bt_3),\bmm_{3-1}G\}$ are known to the attacker aiming at recovering $\bk_1,\bt_1 \in \LCD_1$, recalling that $\bmm_{1-2}G, \bmm_{1-3}G$ are publicly available.

\begin{eqnarray*}
(\bk_2 \cdot \bt_1)G &=& \bk_1 \cdot (\bmm_{1-2}G)  \\
(\bk_1 \cdot \bt_2)G &=& \bk_2 \cdot (\bmm_{2-1}G)\\
(\bk_3 \cdot \bt_1)G &=& \bk_1 \cdot (\bmm_{1-3}G)  \\
(\bk_1 \cdot \bt_3)G &=& \bk_3 \cdot (\bmm_{3-1}G).
\end{eqnarray*}
Since the attacker has access to an algorithm which solves the ECDL problem, it can access $\bmm_{1-2} = (a_{13},a_{14})$ and $\bmm_{1-3} = (a_{33},a_{34})$. Therefore,
denoting by $\bk_2 = (a_{11},a_{12})$, $\bt_2 = (a_{23},a_{24})$, $\bk_3 = (a_{31},a_{32})$ and $\bt_3 = (a_{43},a_{44})$, 
 the previous equations correspond to the following system  in the  unknowns $\bt_1 = (x_1,x_2)$ and $\bk_1 = (x_3,x_4)$:
\begin{equation}\label{eq:sys}
\left\{
\begin{array}{rl}
(a_{11}x_{1} + a_{12}x_2 - a_{13}x_3 -a_{14}x_4)G &= 0\\
(a_{23}x_3 + a_{24}x_4)G &= b_2G\\
(a_{31}x_{1} + a_{32}x_2 - a_{33}x_3 -a_{34}x_4)G &= 0\\
(a_{43}x_3 + a_{44}x_4)G &= b_4G,
\end{array}
\right.
\end{equation}
where $b_2 = \bk_2 \cdot \bmm_{2-1}$ and $b_4 = \bk_3 \cdot \bmm_{3-1}$. Therefore, denoting by 
\begin{equation}\label{eq:mat}
A \deq 
\begin{pmatrix}
a_{11} & a_{12} & -a_{13} & -a_{14}\\
 0 & 0 & a_{23} & a_{24}\\
a_{31} & a_{32} & -a_{33} & -a_{34}\\
 0 & 0 & a_{43} & a_{44}
\end{pmatrix},
\end{equation}
$b \deq (0,b_2,0,b_4)^t$ and $x \deq (x_1,x_2,x_3,x_4)^t$, the system in Eq.\eqref{eq:sys}~is equivalent to the linear equation 
\begin{equation}\label{eq:axb}
Ax=b.
\end{equation}

\begin{theorem}\label{th:main}
Let $\Aa$ be an adversary then can solve the ECDL problem. If $\det(A) \ne 0$, then $\Aa$ can recover $S_1$.
\end{theorem}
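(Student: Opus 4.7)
My plan is to turn the vector system in Eq.~\eqref{eq:sys}, which lives on the curve $\EC$, into an honest linear system over $\F_p$ by exploiting that $G$ has prime order $p$, and then use the ECDL oracle to populate the matrix $A$ with field elements before inverting it.

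First I would observe that for any integers $u,v$, the equality $uG = vG$ holds in $\EC$ if and only if $u \equiv v \pmod{p}$, since $G$ generates a subgroup of order $p$. Applying this componentwise to the four equations of Eq.~\eqref{eq:sys}, the system is equivalent to the $\F_p$-linear system $Ax = b$ displayed in Eq.~\eqref{eq:axb}, with $x = (x_1,x_2,x_3,x_4)^t$ encoding $\bt_1 = (x_1,x_2)$ and $\bk_1 = (x_3,x_4)$.

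Next I would verify that the adversary indeed has algorithmic access to all entries of $A$ and $b$. The entries $a_{11},a_{12}$ (from $\bk_2$), $a_{23},a_{24}$ (from $\bt_2$), $a_{31},a_{32}$ (from $\bk_3$), and $a_{43},a_{44}$ (from $\bt_3$) are immediate from the compromised $\LCD_2$ and $\LCD_3$. The remaining entries $a_{13},a_{14},a_{33},a_{34}$ come from $\bmm_{1-2}$ and $\bmm_{1-3}$, which $\Aa$ extracts from the public topology vectors $\bmm_{1-2}G$ and $\bmm_{1-3}G$ by invoking the ECDL oracle on each of the two coordinates. Similarly, $\Aa$ recovers $\bmm_{2-1}$ and $\bmm_{3-1}$ from the publicly available $\bmm_{2-1}G$ and $\bmm_{3-1}G$, and then computes $b_2 = \bk_2 \cdot \bmm_{2-1}$ and $b_4 = \bk_3 \cdot \bmm_{3-1}$ directly in $\F_p$.

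Finally, under the assumption $\det(A)\neq 0$ in $\F_p$, the matrix $A$ is invertible and the unique solution $x = A^{-1}b$ can be computed by standard Gaussian elimination. Reading off the coordinates yields $\bt_1$ and $\bk_1$, hence the secret component $S_1 = \{\bk_1,\bt_1\}$. The only step that is not mechanical is the translation from curve equations to field equations, which relies crucially on $p$ being the prime order of $G$; once that bridge is crossed, everything else is linear algebra over $\F_p$ plus calls to the ECDL oracle, and there is no conceptual obstacle left.
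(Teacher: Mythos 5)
Your proposal is correct and follows essentially the same route as the paper: the paper's own proof simply notes that the ECDL oracle lets the adversary build the system $Ax=b$ of Eq.~\eqref{eq:axb} and that $\det(A)\neq 0$ gives a unique solution. You merely make explicit two steps the paper leaves implicit (the passage from equalities in $\EC$ to congruences mod the prime order $p$ of $G$, and the inventory of which entries of $A$ and $b$ come from the compromised nodes versus from ECDL calls on the public topology vectors), which is a reasonable elaboration rather than a different argument.
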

\begin{proof}
Since the adversary can solve the ECDL problem, it can build the system $Ax= b$ of Eq.\eqref{eq:axb}. The result trivially follows, since $\det(A) \ne 0$ implies that the system admits one and only one solution. 
\end{proof}
\begin{remark}
Notice that, due to the requirements on the $\LCD$ on each node, the first and the second row of $A$ are linearly independent, and the same holds for the third and the fourth.
If $\Aa$ can access the secret components of two nodes, then, since $\rk(A|b)=\rk(A)\in \{2,4\}$, the system in Eq.\eqref{eq:sys} as at least one solution. Moreover, if $\rk{A} = 2$, the system admits $p^2$ solutions, then the method of Theorem~\ref{th:main} leads to an attack to the scheme with success probability $1/{p^2}$. Indeed, if the attacker selects one of the $p^2-1$ solutions of the system which do not match the correct secret component of node 1, then the attempted impersonation attack is easily disclosed in the authenticated-encryption phase of the protocol (see Sec.~\ref{sec:auth}). The same holds if only one node is compromised by $\Aa$, since only two equations of the system are known. 
\end{remark}

We will now show that the success probability of the attack described in Theorem~\ref{th:main} approaches 1 when the prime $p$ is sufficiently large.
\begin{theorem}\label{th:lim}
Let $\mathcal S_p$ be the probability that $\Aa$ successfully recovers the secret component of $S_1$ using the method of Theorem~\ref{th:main}. Then
\[
\lim_{{p \to \infty}}\mathcal S_p = 1.
\]
\end{theorem}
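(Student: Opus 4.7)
The plan is to exploit the sparse block structure of the matrix $A$ in Eq.~\eqref{eq:mat} to reduce the event $\det(A)\ne 0$ to two independent non-degeneracy conditions on the captured secret components. First I would swap rows $2$ and $3$ of $A$ to obtain a block upper-triangular matrix whose diagonal blocks have rows $(\bk_2,\bk_3)$ and $(\bt_2,\bt_3)$ respectively, deducing that
\[
\det(A) \;=\; -\det\begin{pmatrix}\bk_2\\ \bk_3\end{pmatrix}\,\det\begin{pmatrix}\bt_2\\ \bt_3\end{pmatrix}.
\]
Consequently the method of Theorem~\ref{th:main} succeeds precisely when the pairs $(\bk_2,\bk_3)$ and $(\bt_2,\bt_3)$ are both $\F_p$-linearly independent, so it suffices to bound each failure event separately.

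Next I would analyse the CA's random generation. By construction $\bk_2$ is drawn uniformly from the affine line $L_2\deq\{v\in(\F_p)^2 : v\cdot\bt_1=\bk_1\cdot\bmm_{1-2}\}$, and independently $\bk_3$ is drawn uniformly from the analogous line $L_3$ determined by $\bmm_{1-3}$. The CA's rejection sampling on the topology vectors ensures $\bk_1\cdot\bmm_{1-2}\ne 0$ and $\bk_1\cdot\bmm_{1-3}\ne 0$, so neither $L_2$ nor $L_3$ passes through the origin. For any fixed $\bk_2\in L_2$ the subspace $\langle\bk_2\rangle$ is a line through the origin, hence differs from $L_3$ as a line in $(\F_p)^2$ and meets it in at most one point; therefore the conditional probability that $\bk_3\in\langle\bk_2\rangle$ is at most $1/p$, and averaging over $\bk_2$ gives $\Pr\bigl(\bk_2,\bk_3\text{ linearly dependent}\bigr)\le 1/p$.

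A completely symmetric argument, after conditioning on $\bk_2,\bk_3,\bmm_{2-1},\bmm_{3-1}$, shows that $\bt_2$ and $\bt_3$ are independently uniform on two affine lines avoiding the origin (now thanks to $\bk_j\cdot\bmm_{j-1}\ne 0$ for $j=2,3$), yielding $\Pr\bigl(\bt_2,\bt_3\text{ linearly dependent}\bigr)\le 1/p$. A union bound then produces $\mathcal S_p\ge 1-2/p$, and sending $p\to\infty$ gives the claim.

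The only non-routine step is the geometric one in the second paragraph: everything hinges on the affine lines $L_2,L_3$ (and their $\bt$-analogues) staying away from the origin, for otherwise the one-dimensional subspace $\langle\bk_2\rangle$ could coincide with the line on which $\bk_3$ lives and the failure probability would jump from $O(1/p)$ to $1$. Ensuring this non-degeneracy is precisely the role played by the CA's rejection sampling of the topology vectors, and keeping track of the conditioning so that the two factors of $\det(A)$ really are handled by independent uniform draws is where I would be most careful.
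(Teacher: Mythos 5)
Your proof is correct, and it takes a genuinely different route from the one in the paper. The paper argues by counting: it introduces the set $\mathcal A_p$ of admissible matrices and its subset $\mathcal I_p$ of invertible ones, bounds $\mathcal S_p=\#\mathcal I_p/\#\mathcal A_p$ from below by $\#\mathcal I_p/p^{12}$, and computes $\#\mathcal I_p=(p^2-p)^4(p^2-1)^2$ by enumerating, for a fixed solution $(x_1,x_2,x_3,x_4)$, the admissible choices of $\bk_2,\bt_2,\bmm_{2-1}$ and of $\bk_3,\bt_3,\bmm_{3-1}$. You instead exploit the sparsity of $A$ directly: after the row swap, $\det(A)$ factors as minus the product of the determinant of the $2\times 2$ matrix with rows $\bk_2,\bk_3$ and that with rows $\bt_2,\bt_3$, so failure of the attack of Theorem~\ref{th:main} is exactly the event that one of these pairs is linearly dependent; you then bound each failure probability by $1/p$ using the fact that $\bk_3$ (resp.\ $\bt_3$) is uniform on an affine line of $p$ points avoiding the origin, which meets the one-dimensional subspace spanned by $\bk_2$ (resp.\ $\bt_2$) in at most one point, the non-vanishing of the constants being guaranteed by the CA's conditions $\bk_1\cdot\bmm_{1-j}\ne 0$ and $\bk_j\cdot\bmm_{j-1}\ne 0$. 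Your approach buys an explicit non-asymptotic bound $\mathcal S_p\ge 1-2/p$, a clean structural characterisation of when the attack fails, and it avoids the normalisation $\#\mathcal A_p<p^{12}$ that the paper leaves implicit; the paper's enumeration, in exchange, yields the exact cardinality $\#\mathcal I_p=\left(\#\GL(2,p)^2\right)(p^2-p)^2$, which is of some independent interest. Your tracking of the conditioning (first the $\bk$-draws given $\bk_1,\bt_1$ and the topology vectors, then the $\bt$-draws given the $\bk$'s and $\bmm_{2-1},\bmm_{3-1}$) is sound and matches the sequential way the CA actually generates the parameters.
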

\begin{proof}
Let us recall that, for each parameter assignment in a scheme with 3 nodes, we can construct a matrix as in Eq.~\eqref{eq:mat}. We call such a matrix an \emph{admissible} matrix for the scheme. Let us denote by $\mathcal A_p$ be the set of matrices in $(\F_p)^{4 \times 4}$ which are admissible, and by $\mathcal I_p$ the subset of those which are invertible. Then we have
\begin{equation}\label{eq:lim}
\mathcal S_p = \frac{\#\mathcal I_p}{\#\mathcal A_p} > \frac{\#\mathcal I_p}{p^{12}}.
\end{equation}
Let us now count $\#\mathcal I_p$.
The parameters $\bk_1$ and $\bt_1$ are chosen randomly in $(\F_p)^2\setminus\{{\bf 0}\}$, whereas $\bmm_{1-2} \in (\F_p)^2\setminus\{{\bf 0}\}$ is chosen such that $\bk_1 \cdot \bmm_{1-2} \ne 0$. The coefficient $\bk_2$ is chosen such that $\bk_2 \cdot \bt_1 \ne 0$, since $\bk_2 \cdot \bt_1 = \bk_1 \cdot \bmm_{1-2}$. Notice that, if $\bt_1 = (x_1,x_2)$, then
we need to rule out from the possible choices of $\bk_2=(a_{11},a_{12})$ those which satisfy $a_{11}x_{1} =- a_{12}x_2$. This reduces to $p^2-p$ the possibilities for the vector $\bk_2$.
Now, $\bmm_{2-1}$ can be chosen making sure that $\bk_2 \cdot \bmm_{2-1}\ne 0$. Since $\bk_2$ is not fixed, we have $p^2-1$ possible choices for $\bmm_{1-2}.$ Analogously, $\bt_2$ can be chosen in $p^2-p$ ways, since the value $\bk_1 = (x_3,x_4)$ is fixed and $\bk_1 \cdot \bt_2 \ne 0$. Hence, if the solution $(x_1,x_2,x_3,x_4) \in (\F_p)^4$ is fixed, we obtain $\bk_2, \bt_2, \bmm_{2-1}$ in $(p^2-p)^2(p^2-1)$ way. The same holds when considering $\bk_3, \bt_3, \bmm_{3-1}$. Noticing that in this argument we are using the fact that the constructed matrix is invertible, since we are assuming that $(x_1,x_2,x_3,x_4)$ is the unique solution of the problem, we obtain
\[
\mathcal I_p = (p^2-p)^4(p^2-1)^2 = \left(\#\GL(2,p)^2\right)(p^2-p)^2.
\] 
The result follows from Eq.~\eqref{eq:lim}, considering the limit for ${p \to \infty}$.
\end{proof}

Notice that $\ETS$ can be defined over an $\F_p$-vector space of dimension $d > 2$ similarly to the way it was built in Section~\ref{sec:scheme}  for an $\F_p$-vector space of dimension $2$, and Theorems~\ref{th:main} and~\ref{th:lim} can be extended to the $d$-dimensional case as well. In particular, it is possible to construct a $2d \times 2d$ triangular block matrix $A$ that, for a large $p$, is invertible with probability close to $1$. Moreover, an attacker who can solve the ECDL problem can successfully determine the secret parameters of a target node $i$, provided that it can recover the secret components of at least $d$ nodes connected to node $i$ and $\det(A)\ne 0$.

\section{Conclusion and future works}\label{sec:concl}
In this paper we have introduced the protocol $\ETS$, derived from \cite{pugliese08} and here adapted to the case of elliptic-curve cryptography. We have studied some security issues of the scheme, with a focus on the underlying ECDL problem. We have proven that, even though the secret and public components of the scheme are linked by means of linear equations, an attacker who wants to make use of the linear algebra method (explained in Sec.~\ref{sec:security}) to recover the secret components to a target node needs to be able to solve the ECDL problem and to access the secret components of at least two nodes connected to the target node. Although at the time of writing we understand that the scheme lacks of a general and complete security proof, the search for an argument showing that an attack to $\ETS$ can be converted into an attack to the underlying ECDL problem remains open.
\section*{Aknowledgment}
The authors wish to thank prof. Fortunato Santucci, director of the Centre of EXcellence on Connected Geolocalized and Cybersecure Vehicles (ExEMERGE) at University of L'Aquila, for his contributions in the development of TAKS and for stimulating the further research steps. Studies on the elliptic curve extension of TAKS and the on going implementation of ECTAKS algorithms on hardware chipset for fast automotive applications are part of the research activities in ExEMERGE.

\bibliographystyle{plain}
\bibliography{ECTAKS.bib}

\end{document}